\begin{document}

\author{Holger Heidrich$^1$, Jannik Irmai$^1$, Bjoern Andres$^{1,2}$}
\title{\bf A 4-approximation algorithm for min max correlation clustering}
\date{$^1$\textit{TU Dresden} \hspace{2ex} $^2$\textit{Center for Scalable Data Analytics and AI Dresden/Leipzig}}

\maketitle

\begin{abstract}
We introduce a lower bounding technique for the min max correlation clustering problem and, based on this technique, a combinatorial $4$-approximation algorithm for complete graphs.
This improves upon the previous best known approximation guarantees of $5$, using a linear program formulation \citep{kalhan2019correlation}, and $40$, for a combinatorial algorithm \citep{davies2023fast}.
We extend this algorithm by a greedy joining heuristic and show empirically that it improves the state of the art in solution quality and runtime on several benchmark datasets.
\end{abstract}

\section{INTRODUCTION}

Correlation clustering refers to the task of clustering elements based on pairwise similarity.
The objective is to find a partition of the set of elements such that, preferably, similar elements are in the same cluster and dissimilar elements are in distinct clusters.
The number and size of clusters is not predefined but determined by the pairwise similarities.
Originally, the correlation clustering problem is defined by \citet{bansal2004correlation} for a graph where each edge $\{u,v\}$ is labeled either $+$ or $-$, depending on whether $u$ and $v$ are deemed similar or dissimilar.
For any partition of the node set, an edge $\{u,v\}$ is said to be in disagreement with the partition if the edge is labeled $-$ and the nodes $u$ and $v$ are in the same cluster of the partition, or if the edge is labeled $+$ and the nodes $u$ and $v$ are in distinct clusters of the partition.
\Citet{bansal2004correlation} study the problem of finding a partition that minimizes the number of disagreeing edges. 
More recently, \citet{puleo2016correlation} have introduced the correlation clustering problem with locally bounded errors.
They define for each node in the graph the disagreement of that node as the number of edges that are incident to that node and in disagreement with the partition.
They study the problem of finding a partition that minimizes a function of the disagreements of the nodes.
Among others, they consider $\ell^p$ norms for $p \geq 1$.
In the special case of $p=\infty$, this leads to the problem of finding a partition such that the maximum disagreement over all nodes is minimal.
This special case, called \emph{min max correlation clustering}, has attracted attention as a model of a sense of fairness \citep{davies2023fast}.
Here, we concentrate on the special case of complete graphs, i.e., every pair of nodes is either a $+$ or a $-$ edge.

As a first contribution, we present a lower bounding technique for the min max correlation clustering problem for complete graphs. 
This lower bound can be computed efficiently, by a combinatorial algorithm, and does not require solving a linear program (LP).
Moreover, this bound is different from that obtained by solving the canonical LP relaxation, i.e.~instances exist where either of these bounds is strictly stronger.
As a second contribution, we use this lower bounding technique to derive a $4$-approximation algorithm for the min max correlation clustering problem for complete graphs.
This improves upon the previous best known approximation guarantees of $5$, using an LP formulation \citep{kalhan2019correlation}, and $40$, for a combinatorial algorithm \citep{davies2023fast}.
We briefly discuss generalizations to non-complete and weighted graphs. 
However, we do not establish any approximation guarantees for these cases.
As a third contribution, we extend the $4$-approximation by a local search algorithm that is designed in due consideration of the insights provided by the lower bound.
Empirically, we show: The lower bound and the $4$-approximation with the local search extension outperform the current state of the art in both solution quality and runtime on a variety of benchmark datasets.
In particular, we present lower bounds and approximate solutions for large graphs out of reach of previous methods.

The remainder of the article is organized as follows.
In \Cref{sec:related-work}, we discuss related work. 
In \Cref{sec:problem}, we state the min max correlation clustering problem formally.
In \Cref{sec:bound}, we present the lower bounding technique and derive a $4$-approximation algorithm.
In \Cref{sec:algorithm}, we extend the $4$-approximation by an efficient local search heuristic.
In \Cref{sec:experiments}, we examine the bound and the approximation algorithm empirically, on several benchmark datasets, and compare these to the state of the art.
In \Cref{sec:conclusion}, we draw conclusions and discuss perspectives for future work.
Implementations of all our algorithms and the complete code for reproducing the experiments are available as supplementary material and at \texttt{\url{https://github.com/JannikIrmai/min-max-correlation-clustering}}.
\section{RELATED WORK}\label{sec:related-work}

The correlation clustering problem is introduced originally by \citet{bansal2004correlation}.
It is closely related to the \emph{clique partitioning} problem \citep{groetschel1989cutting,groetschel1990facets} and the \emph{graph partition} problem \citep{chopra1993partition} which ask for partitions of a graph with positive and negative edge weights that minimize the costs of the edges within clusters and between clusters, respectively.
The graph partition problem is also known as the \emph{multicut} problem \citep{deza1992clique},  not to be confused with the \emph{multiterminal cut} problem \citep{dahlhaus1992complexity} or \emph{multicommodity cut} problem \citep{leighton1999multicommodity}.
The complexity and hardness of approximation of the correlation clustering problem are studied, among others, by \citet{bansal2004correlation,demaine2006correlation,ailon2008aggregating,voice2012coalition,bachrach2013optimal,veldt2022correlation,klein2023correlation}.
The best known approximation guarantee for the correlation clustering problem in unweighted complete graphs is $1.996+\epsilon$, due to \citet{cohen2022correlation}.

\citet{puleo2016correlation} propose a generalization of the correlation clustering problem in which the objective is a function of the disagreements of the nodes.
This is motivated by the idea of bounding disagreements locally in order to model a sense of fairness and, e.g., penalize partitions in which individual nodes have a disproportionally large disagreement.
As a special case, they introduce the min max correlation clustering problem, show that it is $\textsc{np}$-hard for complete graphs, and provide a $48$-approximation algorithm.
In a subsequent study, \citet{charikar2017local} present approximation algorithms for several variants of the correlation clustering problem, including a $7$-approximation for the min max objective.
This is further improved by \citet{kalhan2019correlation} who, among other results, present a $5$-approximation algorithm for the correlation clustering problem for complete graphs with the $\ell^p$ objective for all $p \geq 1$, which includes the min max objective ($p = \infty$).
Their algorithm is based on rounding the solution of the canonical LP relaxation of the correlation clustering problem.
As solving an LP can be computationally expensive, \citet{davies2023fast} propose a method for computing a feasible solution to this specific LP more efficiently.
While this solution is not guaranteed to be optimal for the LP, its objective is at most 8 times greater than that of the optimal integral solution for the min max objective. 
This results in a $40$-approximation for the min max correlation clustering problem for complete graphs, which is the first approximation that does not require solving an LP.
They show empirically that their algorithm performs well in practical applications and scales to instances previously out of reach.
The arguments we use in order to prove the existence and efficiency of a combinatorial 4-approximation algorithm (\Cref{thm:four-approximation}) are closely related to the non-local charging argument of \citet{davies2023fast}. Our arguments are combinatorial while \citet{davies2023fast} use a correlation metric (i.e. a fractional solution to the LP relaxation) to prove their result. 
To the best of our knowledge, nothing is known about the hardness of approximation for the min max correlation clustering problem.

The more general $\ell_p$ objective is further studied by \citet{jafarov2020correlation} who present approximation algorithms for complete weighted graphs with a bounded weight range. \citet{davies2023one} establish an algorithm for computing a partition that simultaneously approximates all $\ell_p$-norm objectives within a constant factor.

Another closely related variant of the correlation problem is studied by \citet{ahmadi2019min}.
They consider the objective of minimizing the maximum over the disagreements within all clusters and develop a $\mathcal{O}(\log(n))$-approximation algorithm.
An improved approximation guarantee of $2+\epsilon$ is given by \citet{kalhan2019correlation} who also establish inapproximability within a factor better than $2$ assuming the unique games conjecture (UGC).
Hence, their approximation guarantee is the best possible if UGC holds.

\section{MIN MAX CORRELATION CLUSTERING FOR COMPLETE GRAPHS}\label{sec:problem}

To begin with, we state the min max correlation clustering problem for complete graphs formally, using elementary notation.
For any set $V$, let $P_V$ denote the set of all partitions of $V$.
For any $\Pi \in P_V$ and any $u \in V$, let $[u]_\Pi$ denote the unique $U \in \Pi$ such that $u \in U$.
For any graph $G = (V, E)$ and any $v \in V$, let us refer to
$N_v := \{v\} \cup \{w \in V \mid \{v,w\} \in E\}$
as the neighborhood of $v$ in $G$, including $v$ itself.

\begin{definition}
\label{definition:mmcc}
For any graph $G = (V, E)$, the instance of the \emph{min max correlation clustering problem} with respect to $G$ has the form
\begin{align}\label{eq:mmcc}
    \min_{\Pi \in P_V}
        \underbrace{
            \max_{v \in V}
                \quad 
                    \left| [v]_\Pi \triangle N_v \right|
        }_{=:\ \varphi(\Pi)}
    \tag{MMCC}
\end{align}
where $A \triangle B = (A \setminus B) \cup (B \setminus A)$ is the symmetric difference of sets.
Here, $\left| [v]_\Pi \triangle N_v \right|$ is the disagreement of node $v$ with partition $\Pi$ and $\varphi(\Pi)$ is the maximum disagreement of $\Pi$.
\end{definition}

\ref{eq:mmcc} can be understood as a problem with respect to an edge signed complete graph.
More specifically, one can identify the graph $(V,E)$ in \Cref{definition:mmcc} with the edge signed complete graph $(V, E^+ \cup E^-)$ such that $E^+ = E$.

\section{COMBINATORIAL LOWER BOUND AND 4-APPROXIMATION}\label{sec:bound}

In this section, we present a lower bounding technique for the min max correlation clustering problem for complete graphs.
From this technique, we derive a $4$-approximation algorithm.

We begin by establishing properties of partitions and their maximal disagreement.
The following lemma states that two nodes whose neighborhoods are particularly similar (respectively dissimilar) must be in the same cluster (respectively different clusters) in all partitions whose maximal disagreement is small enough.

\begin{lemma}\label{lem:main-lemma}
    Let $G=(V, E)$ be a graph. 
    For every partition $\Pi \in P_V$ and every $u, v \in V$:
    \begin{enumerate}[nosep,label=(\alph*)]
        \item \label{item:cluster-lower-bound}
        If  $|N_u \cap N_v| > 2 \varphi(\Pi)$ then $[u]_\Pi = [v]_\Pi$.
        \item \label{item:cluster-upper-bound}
        If $|N_u \triangle N_v| > 2 \varphi(\Pi)$ then $[u]_\Pi \neq [v]_\Pi$.
    \end{enumerate}    
\end{lemma}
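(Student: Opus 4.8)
The plan is to prove both parts by contraposition, in each case sandwiching the relevant neighborhood quantity between the two nodes' disagreements. Throughout, write $C_u := [u]_\Pi$ and $C_v := [v]_\Pi$, and recall from the definition of $\varphi$ that $|C_u \triangle N_u| \le \varphi(\Pi)$ and $|C_v \triangle N_v| \le \varphi(\Pi)$. Since each of these two summands is at most $\varphi(\Pi)$, any set that I can bound above by their sum will have cardinality at most $2\varphi(\Pi)$, which is exactly the threshold appearing in both statements.

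For part~\ref{item:cluster-lower-bound}, I would assume the negation of the conclusion, $C_u \neq C_v$, and aim to show $|N_u \cap N_v| \le 2\varphi(\Pi)$. The key observation is that distinct clusters of a partition are disjoint, so no node lies in both $C_u$ and $C_v$. Hence every common neighbor $w \in N_u \cap N_v$ fails to lie in at least one of the two clusters: if $w \notin C_u$ then $w \in N_u \setminus C_u$, and otherwise $w \in C_u$, so $w \notin C_v$ and thus $w \in N_v \setminus C_v$. This yields the containment $N_u \cap N_v \subseteq (N_u \setminus C_u) \cup (N_v \setminus C_v)$. Since $N_u \setminus C_u \subseteq N_u \triangle C_u$ and likewise for $v$, counting gives $|N_u \cap N_v| \le |N_u \triangle C_u| + |N_v \triangle C_v| \le 2\varphi(\Pi)$, which is the contrapositive of the claim.

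For part~\ref{item:cluster-upper-bound}, I would instead assume $C_u = C_v =: C$ and show $|N_u \triangle N_v| \le 2\varphi(\Pi)$. Here the clean route is the algebraic identity $N_u \triangle N_v = (N_u \triangle C) \triangle (N_v \triangle C)$, which holds because the symmetric difference is associative and commutative with $C \triangle C = \emptyset$. Combined with the elementary bound $|A \triangle B| \le |A| + |B|$ (from $A \triangle B \subseteq A \cup B$), this gives $|N_u \triangle N_v| \le |N_u \triangle C| + |N_v \triangle C| \le 2\varphi(\Pi)$, again the contrapositive.

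I do not expect a genuine obstacle: both parts reduce to spotting the right set relation and then a one-line count. The only mild subtlety is identifying the two crucial facts, namely that disjointness of distinct clusters forces every common neighbor to be a disagreement at one of its endpoints (part a), and that the shared cluster cancels out of the symmetric difference via $C \triangle C = \emptyset$ (part b). Once these are in hand, the factor $2\varphi(\Pi)$ falls out immediately from summing the two per-node disagreement bounds.
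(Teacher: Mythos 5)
Your proposal is correct and follows essentially the same route as the paper: both parts are proved by contraposition, with the disjointness of the clusters covering $N_u \cap N_v$ by the two per-node disagreement sets in~(a), and everything reduced to the bound $|N_u \triangle [u]_\Pi| + |N_v \triangle [v]_\Pi| \le 2\varphi(\Pi)$. The only difference is cosmetic: in~(b) you derive the triangle inequality $|N_u \triangle N_v| \le |N_u \triangle C| + |N_v \triangle C|$ from the identity $N_u \triangle N_v = (N_u \triangle C) \triangle (N_v \triangle C)$ together with $|A \triangle B| \le |A| + |B|$, whereas the paper verifies the same inequality by an explicit four-term set decomposition.
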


\begin{proof}\let\qed\relax
    To prove \ref{item:cluster-lower-bound}, we show that $[u]_\Pi \neq [v]_\Pi$ implies $|N_u \cap N_v| \leq 2 \varphi(\Pi)$.
    By definition,
    $
        \varphi(\Pi) \geq |N_u \triangle [u]_\Pi| \geq |N_u \setminus [u]_\Pi| \geq |(N_u \cap N_v) \setminus [u]_\Pi| 
    $.
    Analogously, $\varphi(\Pi) \geq |(N_u \cap N_v) \setminus [v]_\Pi|$.
    The assumption $[u]_\Pi \neq [v]_\Pi$ and the fact that clusters of $\Pi$ are disjoint implies the desired
    $
        |N_u \cap N_v| \leq |(N_u \cap N_w) \setminus [u]_\Pi| + |(N_u \cap N_w) \setminus [v]_\Pi| \leq 2 \varphi(\Pi) 
    $.

    The second statement can be derived from Proposition 4.1 of \citet{davies2023fast}. 
    For completeness, we provide a self-contained prove below.
    We show that $[u]_\Pi = [v]_\Pi$ implies $|N_u \triangle N_v| \leq 2 \varphi(\Pi)$.
    Let $C = [u]_\Pi = [v]_\Pi$.
    By definition:
    \begin{align*}
        |N_u \triangle N_v| 
        &= |N_u \setminus N_v| + |N_v \setminus N_u| \\
        &= |(N_u \setminus N_v) \cap C| + |(N_u \setminus N_v) \setminus C| \\
        	&\quad + |(N_v \setminus N_u) \cap C| + |(N_v \setminus N_u) \setminus C| \\
        &\leq |C \setminus N_v| + |N_u \setminus C| + |C \setminus N_u| + |N_v \setminus C| \\
        &= |N_u \triangle C| + |N_v \triangle C| \\
        &= |N_u \triangle [u]_\Pi| + |N_v \triangle [v]_\Pi| \leq 2 \varphi(\Pi) \enspace . \qquad\square
    \end{align*}
\end{proof}

If there exists a partition $\Pi$ with a given maximal disagreement, \Cref{lem:main-lemma} can imply that certain nodes are in the same cluster and certain nodes are in distinct clusters.
These constraints are captured in the following definition.

\begin{definition}\label{def:lower-upper-cluster-bound}
    Let $G = (V, E)$ be a graph and let $d \in \mathbb{N}$.
    Let $G_d = (V, E_d)$ with $E_d = \{\{u,v\} \in \binom{V}{2} \mid |N_u \cap N_v| > 2d \}$ be the graph with those pairs of nodes of $G$ as edges whose neighborhoods intersect in more than $2d$ nodes.
    Let $\Pi_d \in P_V$ be the partition of $V$ into the maximal connected components of $G_d$.
    For a cluster $C \in \Pi_d$, let $U^d_C = \{v \in V \mid |N_w \triangle N_u| \leq 2d \; \forall w \in [v]_{\Pi_d}\;\forall u \in C\}$.
\end{definition}

Note that the clusters of $\Pi_d$ are precisely the sets of nodes that, by \Cref{lem:main-lemma} \ref{item:cluster-lower-bound}, must be in the same cluster of any partition $\Pi$ with maximum disagreement of $\varphi(\Pi) = d$. 
For all clusters $C \in \Pi_d$, the set $V \setminus U^d_C$ is precisely the set of nodes that, by \Cref{lem:main-lemma} \ref{item:cluster-upper-bound}, cannot be in the cluster containing $C$ of any partition $\Pi$ with $\varphi(\Pi) = d$.

\begin{theorem}\label{thm:clb}
    Let $G = (V, E)$ be a graph.
    The smallest $d \in \mathbb{N}$ with $C \subseteq U^d_C$ for all $C \in \Pi_d$ and
    \begin{align}\label{eq:disagreement-bound}
        \max_{v \in V} \;\; \Bigl|N_v \setminus U^d_{[v]_{\Pi_d}} \Bigr| + \Bigl|[v]_{\Pi_d} \setminus N_v\Bigr| \quad \leq \quad d \enspace,
    \end{align}
    is a lower bound for the min max correlation clustering problem with respect to graph $G$. 
    We call this value the \emph{combinatorial lower bound} and denote it by $\clb(G)$.
\end{theorem}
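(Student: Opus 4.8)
The plan is to prove the lower bound by showing that the two defining conditions of $\clb(G)$ are satisfied by $d = \varphi(\Pi)$ for \emph{every} partition $\Pi \in P_V$. Since $\clb(G)$ is by definition the smallest $d \in \mathbb{N}$ meeting both conditions, this immediately yields $\clb(G) \le \varphi(\Pi^\ast) = \opt$ for an optimal partition $\Pi^\ast$, and hence establishes the lower bound (the same argument also exhibits a valid $d$, so $\clb(G)$ is well defined). Throughout I fix an arbitrary $\Pi$ and set $d = \varphi(\Pi)$. The only real tool is \Cref{lem:main-lemma}.

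First I would establish the key structural fact that $\Pi_d$ refines $\Pi$. By \Cref{lem:main-lemma}\ref{item:cluster-lower-bound}, every edge $\{u,v\} \in E_d$ (i.e.\ $|N_u \cap N_v| > 2d$) joins two nodes with $[u]_\Pi = [v]_\Pi$; hence every connected component of $G_d$ lies within a single cluster of $\Pi$, so each $C \in \Pi_d$ satisfies $C \subseteq [v]_\Pi$ for any $v \in C$. Combined with the contrapositive of \Cref{lem:main-lemma}\ref{item:cluster-upper-bound} --- namely $[w]_\Pi = [u]_\Pi \Rightarrow |N_w \triangle N_u| \le 2d$ --- this says that any two nodes lying in the same $\Pi$-cluster have neighborhoods differing in at most $2d$ elements. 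The first condition, $C \subseteq U^d_C$, then follows directly: for $v \in C$ we have $[v]_{\Pi_d} = C$, and since $C$ is contained in a single $\Pi$-cluster, every $w \in C$ and $u \in C$ satisfy $|N_w \triangle N_u| \le 2d$, so $v \in U^d_C$.

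For the second condition I would fix $v$, write $C = [v]_{\Pi_d}$, and compare the quantity $|N_v \setminus U^d_C| + |C \setminus N_v|$ against the disagreement $|N_v \triangle [v]_\Pi| \le d$. The crucial containment to prove is $[v]_\Pi \subseteq U^d_C$: for any $x \in [v]_\Pi$, both $[x]_{\Pi_d} \subseteq [v]_\Pi$ and $C \subseteq [v]_\Pi$ by the refinement, so every relevant pair $(w,u)$ with $w \in [x]_{\Pi_d}$, $u \in C$ lies in the single $\Pi$-cluster $[v]_\Pi$ and hence has $|N_w \triangle N_u| \le 2d$, placing $x$ in $U^d_C$. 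From $[v]_\Pi \subseteq U^d_C$ I get $N_v \setminus U^d_C \subseteq N_v \setminus [v]_\Pi$, and from $C \subseteq [v]_\Pi$ I get $C \setminus N_v \subseteq [v]_\Pi \setminus N_v$; summing the two cardinalities yields $|N_v \setminus [v]_\Pi| + |[v]_\Pi \setminus N_v| = |N_v \triangle [v]_\Pi| \le \varphi(\Pi) = d$, as required.

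I expect the main obstacle to be the containment $[v]_\Pi \subseteq U^d_C$ needed for the second condition: the definition of $U^d_C$ quantifies over \emph{all} $w \in [x]_{\Pi_d}$ and \emph{all} $u \in C$, so one must verify that the refinement $\Pi_d \preceq \Pi$ keeps every such $w$ and $u$ inside the single cluster $[v]_\Pi$ before \Cref{lem:main-lemma}\ref{item:cluster-upper-bound} can be applied uniformly. Once the refinement property is in hand, both conditions reduce to bookkeeping with symmetric differences, and the remaining inequalities are routine.
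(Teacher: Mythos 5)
Your proof is correct and takes essentially the same approach as the paper's: both rest on \Cref{lem:main-lemma} to establish the containment chain $[v]_{\Pi_d} \subseteq [v]_\Pi \subseteq U^d_{[v]_{\Pi_d}}$ for $d = \varphi(\Pi)$ and then bound each node's disagreement from below by $\bigl|N_v \setminus U^d_{[v]_{\Pi_d}}\bigr| + \bigl|[v]_{\Pi_d} \setminus N_v\bigr|$. The only difference is presentational: you verify both defining conditions directly at $d = \varphi(\Pi)$ for every partition (which also makes explicit the check $C \subseteq U^d_C$ that the paper leaves implicit), whereas the paper phrases the final step as a contradiction against a hypothetical partition with $\varphi(\Pi) < \clb(G)$.
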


\begin{proof}
    Let $\Pi \in P_V$ be a partition with maximum disagreement $d = \varphi(\Pi)$.
    By \Cref{lem:main-lemma} and \Cref{def:lower-upper-cluster-bound}, the clusters of $\Pi$ are unions of clusters of $\Pi_d$ that are contained in the sets $U^d_C$ for $C \in \Pi_d$, i.e. $[v]_{\Pi_d} \subseteq [v]_\Pi \subseteq U^d_{[v]_{\Pi_d}}$ for all $v \in V$.
    With this, the disagreement of any node $v \in V$ with $\Pi$ can be bounded by
    \begin{align}\label{eq:node-disagreement-bound}
        |N_v \triangle [v]_\Pi| 
        &= |N_v \setminus [v]_\Pi| + |[v]_\Pi \setminus N_v| 
        \nonumber\\
        &\geq |N_v \setminus U^d_{[v]_{\Pi_d}} | + |[v]_{\Pi_d} \setminus N_v| 
        \enspace .
    \end{align}

    Now, let $d = \clb(G)$ be the smallest $d \in \mathbb{N}$ that satisfies the conditions in the statement of the theorem.
    Suppose there exists a partition $\Pi \in P_V$ with maximum disagreement $\varphi(\Pi) = d' < d$.
    By definition of $d$, there exists $v \in V$ with $|N_v \setminus U^d_{[v]_{\Pi_{d'}}} | + |[v]_{\Pi_{d'}} \setminus N_v| > d'$. 
    By \eqref{eq:node-disagreement-bound}, this implies $\varphi(\Pi) \geq |N_v \triangle [v]_\Pi| > d'$ in contradiction to the assumption.
\end{proof}

\begin{theorem}\label{thm:clb-complexity}
    Let $G = (V, E)$ be a graph.
    The combinatorial lower bound $\clb(G)$ can be computed in time $\mathcal{O}(n^2\log_2(\delta) + n\delta^2)$ where $n$ is the number of nodes in $G$, and $\delta$ is the maximum degree of all nodes in $G$.
\end{theorem}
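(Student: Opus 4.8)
The plan is to binary search over the candidate values of $d$ after a one-time precomputation of all pairwise neighbourhood intersection sizes. The predicate evaluated by the search is $P(d)$, defined as the conjunction of the two conditions in \Cref{thm:clb}: that $C \subseteq U^d_C$ for every $C \in \Pi_d$, and that the maximum in \eqref{eq:disagreement-bound} is at most $d$. The output is then the least $d$ with $P(d)$, which by definition equals $\clb(G)$.

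The first thing I would establish is that $P$ is monotone, i.e.\ that $P(d)$ implies $P(d+1)$, since this is what makes binary search (and hence the logarithmic factor) applicable. Because $E_{d+1} \subseteq E_d$, the partition $\Pi_{d+1}$ refines $\Pi_d$, so each cluster of $\Pi_{d+1}$ is contained in a cluster of $\Pi_d$. Combining this refinement with the widening of the threshold from $2d$ to $2(d+1)$, one verifies that (i) every within-cluster symmetric difference admissible at level $d$ stays admissible at level $d+1$, giving monotonicity of the first condition, and (ii) $U^{d+1}_{[v]_{\Pi_{d+1}}} \supseteq U^{d}_{[v]_{\Pi_{d}}}$ while $[v]_{\Pi_{d+1}} \subseteq [v]_{\Pi_{d}}$, so that for each $v$ both $|N_v \setminus U^d_{[v]_{\Pi_d}}|$ and $|[v]_{\Pi_d}\setminus N_v|$ are nonincreasing in $d$, whereby the left-hand side of \eqref{eq:disagreement-bound} does not increase while the right-hand side grows. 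Thus $P$ has threshold behaviour, and since the all-singletons partition certifies $\opt \le \delta$ we get $\clb(G) \le \delta$; it therefore suffices to search $d \in \{0,\dots,\delta\}$, i.e.\ $\mathcal{O}(\log_2 \delta)$ iterations.

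For the precomputation I would use the common-neighbour identity $|N_u \cap N_v| = |\{x \in V : \{u,v\} \subseteq N_x\}|$: iterate over every $x$ and increment a counter for each of the $\binom{|N_x|}{2}$ pairs drawn from $N_x$. This costs $\sum_{x} \mathcal{O}(|N_x|^2) = \mathcal{O}(n\delta^2)$ and stores only the (at most $\mathcal{O}(n^2)$ distinct) pairs with positive intersection. Together with the sizes $|N_v|$, this yields every symmetric difference in $\mathcal{O}(1)$ via $|N_u \triangle N_v| = |N_u| + |N_v| - 2|N_u\cap N_v|$.

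Finally I would bound the cost of evaluating $P(d)$ by $\mathcal{O}(n^2)$ per iteration. Building $G_d$ from the stored pairs and extracting its connected components $\Pi_d$ is linear in $n$ plus the number of stored pairs, hence $\mathcal{O}(n^2)$. Deciding $P(d)$ reduces to a cluster-compatibility table: declare clusters $C,C'$ incompatible whenever some $u\in C$, $w\in C'$ satisfy $|N_u\triangle N_w| > 2d$, computed by scanning all $\binom{n}{2}$ pairs once in $\mathcal{O}(n^2)$; the diagonal entries decide the first condition, since $C\subseteq U^d_C$ holds exactly when $C$ is self-compatible, and $U^d_C$ is the union of the clusters compatible with $C$. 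Given this table, \eqref{eq:disagreement-bound} is evaluated for each $v$ by scanning $N_v$ to count its neighbours lying in $U^d_{[v]_{\Pi_d}}$ and in $[v]_{\Pi_d}$, costing $\sum_v \mathcal{O}(|N_v|) = \mathcal{O}(n\delta)$. Each iteration is thus $\mathcal{O}(n^2)$, so the search phase is $\mathcal{O}(n^2\log_2\delta)$, and adding the precomputation gives the claimed $\mathcal{O}(n^2\log_2(\delta) + n\delta^2)$. The main obstacle is the monotonicity argument, as it underpins the logarithmic factor; the wedge-counting bound and the $\mathcal{O}(n^2)$ compatibility accounting are the remaining points requiring care.
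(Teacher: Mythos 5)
Your proposal is correct and follows essentially the same approach as the paper's proof: precompute all pairwise intersection sizes by wedge counting in $\mathcal{O}(n^2 + n\delta^2)$, decide each candidate $d$ in $\mathcal{O}(n^2)$ via the connected components $\Pi_d$ and a cluster-compatibility structure (the paper's auxiliary graph $G'$), and bisect over $d \in \{0,\dots,\delta\}$. Your explicit verification that the predicate $P(d)$ is monotone (via the refinement $\Pi_{d+1}$ of $\Pi_d$ and the inclusion $U^{d}_{[v]_{\Pi_{d}}} \subseteq U^{d+1}_{[v]_{\Pi_{d+1}}}$) is a welcome addition that the paper's proof leaves implicit when it asserts that bisection suffices.
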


\begin{proof}
    To begin with, we show that for all pairs of nodes the size of the intersections of their neighborhoods can be computed in time $\mathcal{O}(n^2 + n\delta^2)$.
    This can be done by the following algorithm:
    For all $u, v \in V$, let $I_{\{u,v\}} = 0$.
    For every $w \in V$ and every $\{u, v\} \in N_w$, increase $I_{\{u,v\}}$ by one to account for the fact that $w$ is in $N_u \cap N_v$.
    Then, $I_{\{u,v\}} = |N_u \cap N_v|$, and clearly, this can be done in time $\mathcal{O}(n^2 + n\delta^2)$.

    Next, we show that for a given $d \in \mathbb{N}$ it can be decided in time $\mathcal{O}(n^2)$ whether \eqref{eq:disagreement-bound} holds.
    The partition $\Pi_d$ consists of the connected components of the graph whose edges are all pairs of nodes $\{u,v\} \in \binom{V}{2}$ with $I_{\{u, v\}} > 2d$.
    These connected components can be computed, for example, by breadth first search in time $\mathcal{O}(n^2)$.
    For every node $v \in V$, the value $|[v]_{\Pi_d} \setminus N_v|$ can clearly be computed in time $\mathcal{O}(n)$.
    It remains to compute the value $|N_v \setminus U^d_{[v]_{\Pi_d}}|$ for every $v \in V$.
    To this end, we construct an auxiliary graph $G' = (V', E')$ whose nodes are the clusters in $\Pi_d$, and two clusters $C, C' \in \Pi_d$ are connected by an edge if and only if $|N_u \triangle N_{u'}| \leq 2d$ for all $u \in C$ and $u' \in C'$.
    Clearly, this graph can be computed in time $\mathcal{O}(n^2)$.
    By definition of $U^d_{[v]_{\Pi_d}}$:
    \begin{align}\label{eq:cluster-complement-linear-time}
        |N_v \setminus U^d_{[v]_{\Pi_d}}| = |\{u \in N_v \mid \{[u]_{\Pi_d}, [v]_{\Pi_d}\} \notin E'\}| \enspace,
    \end{align}
    which is the number of neighbors $u$ of $v$ whose cluster is not connected to the cluster of $v$ in $G'$.
    By representing $G'$ with an adjacency matrix, \eqref{eq:cluster-complement-linear-time} can be evaluated in time $\mathcal{O}(n)$ for each node.

    Lastly, the minimal $d$ that satisfies \eqref{eq:disagreement-bound} can be computed by a bisection algorithm.
    Clearly, $0 \leq \clb(G) \leq \delta$.
    Thus, $\mathcal{O}(\log_2(\delta))$ bisection steps are sufficient.
\end{proof}
This worst case time complexity of $\mathcal{O}(n^2\log_2(\delta) + n\delta^2) \subseteq \mathcal{O}(n^3)$ for computing the combinatorial lower bound is smaller than that of solving the LP relaxation with $\mathcal{O}(n^2)$ variables and $\mathcal{O}(n^3)$ constraints.

Below, \Cref{lem:unique-cluster} and \Cref{cor:unique-cluster} state that for every partition whose maximal disagreement is less than a quarter of the size of the neighborhood of a given node, the cluster of that node in the partition is uniquely determined.
Afterward, \Cref{thm:four-approximation} states that this yields a $4$-approximation algorithm for min max correlation clustering.

\begin{lemma}\label{lem:unique-cluster}
    Let $G = (V, E)$ be a graph and let $d = \clb(G)$ be the combinatorial lower bound according to \Cref{thm:clb}.
    For any node $v \in V$ with $|N_v| > 4 d$:
    \begin{align*}
        [v]_{\Pi_d} = U^d_{[v]_{\Pi_d}} = \{u \in V \mid |N_u \cap N_v| > |N_v|/2\} \enspace .
    \end{align*}
\end{lemma}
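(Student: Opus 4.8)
Write $C = [v]_{\Pi_d}$ and $S = \{u \in V \mid |N_u \cap N_v| > |N_v|/2\}$. The plan is to establish the cycle of inclusions $S \subseteq C \subseteq U^d_C \subseteq S$, which forces all three sets to coincide and proves the claim. Everything is driven by a single numerical fact: the hypothesis $|N_v| > 4d$ is equivalent to $|N_v|/2 > 2d$, so that any intersection exceeding half of $N_v$ automatically exceeds the threshold $2d$ defining the edges of $G_d$.

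For the first inclusion $S \subseteq C$, I would take $u \in S$. If $u = v$ the claim is immediate, so assume $u \neq v$; then $|N_u \cap N_v| > |N_v|/2 > 2d$, so $\{u,v\} \in E_d$ and hence $u$ and $v$ lie in the same connected component of $G_d$, giving $u \in [v]_{\Pi_d} = C$. The middle inclusion $C \subseteq U^d_C$ requires no work here: since $d = \clb(G)$, the very definition of the combinatorial lower bound in \Cref{thm:clb} guarantees $C \subseteq U^d_C$ for every $C \in \Pi_d$.

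The remaining inclusion $U^d_C \subseteq S$ is the one place where the definition of $U^d_C$ must be unpacked with care. For $u \in U^d_C$, the defining condition quantifies over \emph{all} nodes of $[u]_{\Pi_d}$ and \emph{all} nodes of $C$; specializing to $u \in [u]_{\Pi_d}$ itself and to $v \in C$ extracts the single pairwise inequality $|N_u \triangle N_v| \leq 2d$. Since $|N_v \setminus N_u| \leq |N_u \triangle N_v|$, I would then estimate $|N_u \cap N_v| = |N_v| - |N_v \setminus N_u| \geq |N_v| - 2d > |N_v|/2$, the final strict inequality again invoking $|N_v| > 4d$. Hence $u \in S$, closing the cycle.

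I expect the main (and only mildly subtle) obstacle to be reading the universally quantified definition of $U^d_C$ correctly, so as to isolate precisely the bound $|N_u \triangle N_v| \leq 2d$ from the full quantification over the two clusters, together with remembering to treat the degenerate case $u = v$ separately in the first inclusion. Once these points are handled, the symmetric-difference arithmetic and the translation $|N_v| > 4d \iff |N_v|/2 > 2d$ are entirely routine.
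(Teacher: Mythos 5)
Your proof is correct and follows essentially the same route as the paper: the same cycle of inclusions $S \subseteq [v]_{\Pi_d} \subseteq U^d_{[v]_{\Pi_d}} \subseteq S$, with the middle inclusion taken from the definition of $\clb(G)$ and both remaining inclusions driven by the observation $|N_v| > 4d \iff |N_v|/2 > 2d$. The only cosmetic difference is that the paper proves $U^d_{[v]_{\Pi_d}} \subseteq S$ by contraposition via the identity $|N_u \triangle N_v| = |N_u \cup N_v| - |N_u \cap N_v|$ and a reference to \Cref{lem:main-lemma}~\ref{item:cluster-upper-bound}, whereas you unpack the quantified definition of $U^d_C$ directly, which is an equivalent (arguably slightly more self-contained) computation.
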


\begin{proof}
    By definition of the combinatorial lower bound in \Cref{thm:clb}: $[v]_{\Pi_d} \subseteq U^d_{[v]_{\Pi_d}}$.
    To prove the claim, it remains to show $U^d_{[v]_{\Pi_d}} \subseteq \{u \in V \mid |N_u \cap N_v| > |N_v|/2\} \subseteq [v]_{\Pi_d}$.

    By the assumption that $|N_v| > 4 d$, we have $|N_v|/2 > 2 d$, which implies that every $u \in V$ that satisfies $|N_u \cap N_v| > |N_v|/2$ is in $[v]_{\Pi_d}$, by definition of $\Pi_d$. 
    This yields the second inclusion.

    For $u \in V$ with $|N_u \cap N_v| \leq |N_v|/2$, we have
    \begin{align*}
        |N_u \triangle N_v| 
        &= |N_u \cup N_v| - |N_u \cap N_v| \\
        &\geq |N_v| - |N_v|/2 = |N_v|/2 > 2 d
        \enspace ,
    \end{align*}
    and \Cref{lem:main-lemma} \ref{item:cluster-upper-bound} implies $u \notin U^d_{[v]_{\Pi_d}}$.
    This yields the first inclusion.
\end{proof}

\begin{corollary}\label{cor:unique-cluster}
    Let $G = (V, E)$ be a graph. For every partition $\Pi \in P_V$ and every node $v \in V$ with $|N_v| > 4 \varphi(\Pi)$: $[v]_\Pi = \{u \in V \mid |N_u \cap N_v| > |N_v|/2\}$.
\end{corollary}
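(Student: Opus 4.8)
The plan is to prove the two set inclusions directly from \Cref{lem:main-lemma}, without invoking the combinatorial lower bound machinery ($\Pi_d$ and $U^d_C$) that appears in \Cref{lem:unique-cluster}. The corollary concerns an \emph{arbitrary} partition $\Pi$ rather than the specific partition $\Pi_d$ with $d = \clb(G)$, so I expect the cleanest route to replace the role of $d$ throughout by $\varphi(\Pi)$ and appeal to the two implications of \Cref{lem:main-lemma} directly. Writing $\varphi := \varphi(\Pi)$, the hypothesis $|N_v| > 4\varphi$ gives the single bridging inequality $|N_v|/2 > 2\varphi$, which is what connects the threshold $|N_v|/2$ in the claimed set to the threshold $2\varphi$ appearing in both parts of \Cref{lem:main-lemma}.

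For the inclusion $\{u \in V \mid |N_u \cap N_v| > |N_v|/2\} \subseteq [v]_\Pi$, I would take any $u$ with $|N_u \cap N_v| > |N_v|/2$; the bridging inequality then yields $|N_u \cap N_v| > 2\varphi$, so \Cref{lem:main-lemma} \ref{item:cluster-lower-bound} forces $[u]_\Pi = [v]_\Pi$, i.e.\ $u \in [v]_\Pi$. For the reverse inclusion I would argue by contraposition: if $|N_u \cap N_v| \leq |N_v|/2$, then
\begin{align*}
    |N_u \triangle N_v| = |N_u \cup N_v| - |N_u \cap N_v| \geq |N_v| - |N_v|/2 = |N_v|/2 > 2\varphi \enspace,
\end{align*}
so \Cref{lem:main-lemma} \ref{item:cluster-upper-bound} gives $[u]_\Pi \neq [v]_\Pi$, whence $u \notin [v]_\Pi$. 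The two inclusions together establish the claimed equality.

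There is essentially no substantive obstacle here: the argument mirrors the two-inclusion computation already carried out in the proof of \Cref{lem:unique-cluster}, with $\varphi(\Pi)$ in place of $d$ and with the direct implications of \Cref{lem:main-lemma} standing in for the $\Pi_d$- and $U^d_C$-based reasoning. The only point worth stating explicitly is \emph{why} the machinery can be dropped: for a general $\Pi$ we do not have the defining property $C \subseteq U^d_C$ of $\clb(G)$ at our disposal, but we also do not need it, because \Cref{lem:main-lemma} already pins down cluster membership from the neighborhood quantities alone. As a consistency check, note that the right-hand side set does not depend on $\Pi$, so the corollary simultaneously asserts that every partition with $\varphi(\Pi) < |N_v|/4$ assigns $v$ to one and the same cluster, which is in agreement with, and slightly more general than, \Cref{lem:unique-cluster}.
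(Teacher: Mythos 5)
Your proof is correct and matches the paper's intended argument: the paper states the corollary without an explicit proof, and the natural derivation is exactly your two-inclusion computation from \Cref{lem:main-lemma}, which is the same calculation carried out in the proof of \Cref{lem:unique-cluster} with $\varphi(\Pi)$ in place of $d = \clb(G)$. Your closing observation that the $\Pi_d$/$U^d_C$ machinery is not needed for an arbitrary partition is accurate and correctly identifies why the corollary holds in this slightly more general form.
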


\begin{theorem}\label{thm:four-approximation}
    There exists a $4$-approximation algorithm for the min max correlation clustering problem. 
    The $4$-approximation can be computed in time $\mathcal{O}(n^2 + n \delta^2)$ where $n$ is the number of nodes and $\delta$ is the largest degree.
\end{theorem}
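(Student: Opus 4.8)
The plan is to exhibit an explicit partition built from the ``majority neighborhoods'' $M_v := \{u \in V \mid |N_u \cap N_v| > |N_v|/2\}$ and to certify that it is a $4$-approximation by charging its disagreements against the lower bound $d := \clb(G) \le \opt$. The guiding observation, supplied by \Cref{lem:unique-cluster} and \Cref{cor:unique-cluster}, is that every node $v$ with $|N_v| > 4d$ is \emph{forced}: its majority neighborhood satisfies $M_v = [v]_{\Pi_d} = U^d_{[v]_{\Pi_d}}$. I would therefore define the output partition $\hat\Pi$ by keeping, for every such large node $v$, the cluster $M_v$, and by declaring every remaining node a singleton. These kept sets form a valid partition: large nodes lying in a common cluster of $\Pi_d$ have identical $M_v$, while large nodes in distinct clusters of $\Pi_d$ have disjoint majority neighborhoods, since $M_v = [v]_{\Pi_d}$ and the clusters of $\Pi_d$ are disjoint.

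The approximation bound would then be shown node by node. If $v$ lies in a kept cluster $C$, then every node of $C$ shares the $\Pi_d$-cluster of the large node that formed it, so $C = [v]_{\Pi_d} = U^d_{[v]_{\Pi_d}}$ uniformly, and
\begin{align*}
    |N_v \triangle [v]_{\hat\Pi}| = \bigl|N_v \setminus U^d_{[v]_{\Pi_d}}\bigr| + \bigl|[v]_{\Pi_d} \setminus N_v\bigr| \le d
\end{align*}
by the defining inequality \eqref{eq:disagreement-bound} of $\clb(G)$. If instead $v$ is a singleton, then $v$ is not large, i.e.\ $|N_v| \le 4d$, and its disagreement is $|N_v \setminus \{v\}| = |N_v| - 1 < 4d$. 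Hence $\varphi(\hat\Pi) < 4d \le 4\opt$ (the degenerate case $d = 0$ makes every node large, so that $\hat\Pi$ is optimal), which gives the claimed factor of $4$.

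For the running time, which I expect to be the main obstacle, I would first compute all intersection sizes $|N_u \cap N_v|$ in $\mathcal{O}(n^2 + n\delta^2)$ exactly as in the proof of \Cref{thm:clb-complexity}, and read off every $M_v$ in a further $\mathcal{O}(n^2)$. The delicate point is to assemble $\hat\Pi$ in $\mathcal{O}(n^2)$ \emph{without} the bisection over $d$ used in \Cref{thm:clb-complexity}, since precisely that bisection contributes the extra $\log_2 \delta$ factor that the statement forbids. My idea is to process the nodes in order of non-increasing $|N_v|$, letting each still-unassigned node claim its still-unassigned majority neighborhood; a large node then necessarily forms exactly $M_v$, because every node processed before it is itself large and large clusters are forced to be disjoint. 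The part that needs real care is the treatment of the nodes left over once all large clusters are fixed: these must be turned into singletons, or otherwise shown to incur disagreement below $4d$, using only locally available data and never referring to the unknown value $d$. Making this leftover handling rigorous — so that no small node is trapped in an oversized cluster — is where I expect the genuine work to lie, after which the node-by-node disagreement bound above applies verbatim to the constructed partition.
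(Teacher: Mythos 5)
Your correctness skeleton is fine: given $d = \clb(G)$, keeping the forced clusters $M_v = [v]_{\Pi_d} = U^d_{[v]_{\Pi_d}}$ of all nodes with $|N_v| > 4d$ and bounding their members' disagreements via \eqref{eq:disagreement-bound}, while singletons cost at most $|N_v| - 1 \leq 4d - 1$, does yield a partition with $\varphi(\hat\Pi) < 4d \leq 4\opt$. The genuine gap is exactly the one you flag yourself: you never obtain this partition within the claimed $\mathcal{O}(n^2 + n\delta^2)$. Computing $d = \clb(G)$ costs the bisection factor of \Cref{thm:clb-complexity}, and your substitute --- processing nodes in non-increasing degree order and letting each unassigned node claim its unassigned majority neighborhood --- has no stopping rule. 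Without knowing $d$, the algorithm cannot tell where the large nodes end; once it reaches a small node $v$ (one with $|N_v| \leq 4d$), the set $M_v$ is not forced by \Cref{lem:unique-cluster}, and a cluster built from it can trap nodes and incur disagreement with no bound in terms of $\opt$; conversely, declaring all remaining nodes singletons requires knowing the threshold $4d$. So the per-node bound via \eqref{eq:disagreement-bound} does not ``apply verbatim'': it applies only if the constructed clusters are provably exactly $\{M_v \mid |N_v| > 4d\}$, which is the open part of your plan.

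The paper closes this gap with a different, and essential, idea: it never computes $\clb(G)$ or any threshold at all. It greedily selects the node $v$ of largest current disagreement, forms the candidate cluster $C = \{u \mid |N_u \cap N_v| > |N_v|/2\}$, and performs a purely local check --- does $C$ conflict with a previously fixed cluster, or contain a node whose disagreement with $C$ exceeds $|N_v|/4$? By \Cref{cor:unique-cluster} (applied to a hypothetical partition with maximum disagreement below $|N_v|/4$, not to $\clb$), a failed check is a \emph{certificate} that $\opt \geq |N_v|/4$; since the current maximum disagreement is that of $v$ itself, at most $|N_v| - 1 < 4\opt$, the current partition is already a $4$-approximation and the algorithm terminates on the spot. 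This ``failure certifies a lower bound on $\opt$'' mechanism is precisely the leftover-handling you could not make rigorous: small nodes simply remain singletons because the algorithm stops, correctly and verifiably, before ever having to decide whether they are large. To repair your proposal you would need to import this adaptive termination test (or an equivalent one), at which point your construction essentially becomes the paper's algorithm.
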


\begin{proof}
    The algorithm starts with the partition into singleton clusters.
    Iteratively, a node $v$ with largest disagreement with respect to the current partition is selected.
    If there exists a partition with maximum disagreement strictly less than $|N_v|/4$, then the cluster of $v$ is uniquely determined by $C = \{u \in V \mid |N_u \cap N_v| > |N_v|/2\}$, according to \Cref{cor:unique-cluster}.
    If the cluster $C$ is in conflict with any previously computed cluster (i.e.~it contains nodes that have been assigned to a different cluster in an earlier iteration) or it contains a node whose disagreement with respect to that cluster is greater than $|N_v|/4$, then there cannot exists a partition with disagreement less than $|N_v|/4$.
    This implies that the current partition is a $4$-approximation, and the algorithm terminates.
    Otherwise, $C$ is added to the current partition and the algorithm continues.

    As in \Cref{thm:clb-complexity}, the size of the intersections of neighborhoods of all pairs of nodes can be computed in $\mathcal{O}(n^2 + n \delta^2)$.
    The algorithm described above terminates after at most $\mathcal{O}(n)$ iterations.
    In each iteration, the cluster $C$ can be computed in $\mathcal{O}(n)$.
    The disagreement of a node with a given cluster can be computed in $\mathcal{O}(\delta)$.
    As each node is assigned to a cluster at most once, the disagreement needs to be computed for at most $\mathcal{O}(n)$ nodes. 
    Together, this implies the claimed runtime.
\end{proof}

\subsection{Relation to LP bound}

The combinatorial lower bound from \Cref{thm:clb} differs from the bound obtained by solving the canonical LP relaxation of the min max correlation clustering problem.
In this section, we show by \Cref{example:combinatorial-bound-better,example:lp-bound-better} that neither bound is stronger.
For an in-depth discussion of the LP bound, we refer to \citet{kalhan2019correlation}.
An efficient algorithm for computing the combinatorial lower bound is detailed in the proof of \Cref{thm:clb-complexity}.

\begin{figure}
    \centering
    \begin{subfigure}{0.49\columnwidth}
        \centering
        \begin{tikzpicture}
    \node[vertex] (0) at (0, 0) {\scriptsize 0};
    \node[vertex] (1) at (1, 0.8) {\scriptsize 1};
    \node[vertex] (2) at (1, 0) {\scriptsize 2};
    \node[vertex] (3) at (1, -0.8) {\scriptsize 3};
    \node[vertex] (4) at (2, 0) {\scriptsize 4};
    \node[vertex] (5) at (3, 0.8) {\scriptsize 5};
    \node[vertex] (6) at (3, -0.8) {\scriptsize 6};

    \draw (0) -- (1);
    \draw (0) -- (2);
    \draw (0) -- (3);
    \draw (1) -- (4);
    \draw (2) -- (4);
    \draw (3) -- (4);
    \draw (4) -- (5);
    \draw (4) -- (6);
    \draw (5) -- (6);
\end{tikzpicture}
        \caption{}
        \label{fig:combinatorial-bound-better}
    \end{subfigure}
    \hfill
    \begin{subfigure}{0.49\columnwidth}
        \centering
        \begin{tikzpicture}
    \node[vertex] (0) at (0, 0) {\scriptsize 0};
    \node[vertex] (1) at (0.7, 0.8) {\scriptsize 1};
    \node[vertex] (2) at (0.7, -0.8) {\scriptsize 2};
    \node[vertex] (3) at (1.6, 0.5) {\scriptsize 3};
    \node[vertex] (4) at (1.6, -0.5) {\scriptsize 4};
    \node[vertex] (5) at (2.5, 0) {\scriptsize 5};
    \draw (0) -- (1);
    \draw (0) -- (2);
    \draw (1) -- (3);
    \draw (2) -- (4);
    \draw (3) -- (4);
    \draw (3) -- (5);
    \draw (4) -- (5);
\end{tikzpicture}
        \caption{}
        \label{fig:lp-bound-better}
    \end{subfigure}
    \caption{For the graph depicted in \subref{fig:combinatorial-bound-better}, the combinatorial bound ($3$) is stronger than the LP bound ($\tfrac{7}{4}$).
    For the graph depicted in \subref{fig:lp-bound-better}, the LP bound ($\tfrac{5}{4}$) is stronger than the combinatorial bound ($1$). 
    For details, see \Cref{example:combinatorial-bound-better,example:lp-bound-better}.}
    \label{fig:bound-comarison}
\end{figure}

\begin{example}\label{example:combinatorial-bound-better}
    Consider the graph $G=(V,E)$ depicted in \Cref{fig:combinatorial-bound-better}.
    Suppose there exists a partition $\Pi$ with maximum disagreement $\varphi(\Pi) \leq 2$. 
    By \Cref{lem:main-lemma} \ref{item:cluster-lower-bound}, all pairs of nodes whose neighborhoods share at least five nodes must be in the same cluster in $\Pi$. 
    In the given graph, the neighborhoods of all pairs of nodes share at most 3 nodes, i.e. $\Pi_2 = \{\{v\} \mid v \in V\}$ consists of singleton clusters. 
    By \Cref{lem:main-lemma} \ref{item:cluster-upper-bound}, all pairs of nodes whose neighborhoods have a symmetric difference containing five or more nodes must be in distinct clusters of $\Pi$.
    For the given graph, this implies that $4$ must be in a cluster different from that of $1$, $2$, and $3$, i.e $U^2_{\{4\}} = \{0, 4, 5, 6\}$.
    We can bound the disagreement of $4$ with respect to partition $\Pi$ by 
    \begin{align*}
        |N_4 \triangle [4]_\Pi| &= |N_4 \setminus [4]_\Pi| + |[4]_\Pi \setminus N_4| 
        \\
        &\geq |N_4 \setminus \{0, 4, 5,6\} | + |\{4\} \setminus N_4| = 3
    \end{align*}
    which is strictly greater than $2$, in contradiction to the assumption. 
    Therefore, every partition has maximum disagreement at least $3$.
    In fact, this is the optimal value assumed e.g.~by the partition $\Pi = \{\{0, 1\}, \{2\}, \{3\}, \{4, 5, 6\}\}$.

    In contrast, the LP bound is $\tfrac{7}{4}$, which is strictly less than the combinatorial bound of $3$.
    The LP bound is assumed by the solution $x_{\{4,5\}} = 0$, $x_{\{0,1\}} = x_{\{0,2\}} = x_{\{0,3\}} = x_{\{1,4\}} = x_{\{1,5\}} = x_{\{2,4\}} = x_{\{2,5\}} = x_{\{3,4\}} = x_{\{3,5\}} = \tfrac{1}{2}$, $x_{\{1,6\}} = x_{\{2,6\}} = x_{\{3,6\}} = \tfrac{3}{4}$, and $x_e = 1$ for all other edges $e$.
    The maximum disagreement of $\tfrac{7}{4}$ is obtained at node $4$.    
\end{example}

\begin{example}\label{example:lp-bound-better}
    Consider the graph $G=(V,E)$ depicted in \Cref{fig:lp-bound-better}.
    Suppose there exists a partition $\Pi$ with disagreement $\varphi(\Pi) = 0$. 
    Then, by \Cref{lem:main-lemma} \ref{item:cluster-lower-bound}, all pairs of nodes whose neighborhoods intersect in at least one node must be in the same cluster in $\Pi$.
    For the given graph, this implies that all nodes are in the same cluster, i.e.~$\Pi_0 = \{V\}$.
    However, this partition has disagreement $3$, in contrast to the assumption.
    Thus, there cannot exist a partition of with disagreement $0$.
    
    Next, suppose there exists a partition $\Pi$ with disagreement $\varphi(\Pi) = 1$.
    By \Cref{lem:main-lemma} \ref{item:cluster-lower-bound}, all pairs of nodes whose neighborhoods share at least three nodes must be in the same cluster in $\Pi$.
    This implies that the nodes $3$, $4$, and $5$ are in the same cluster, i.e. $\Pi_1 = \{\{0\}, \{1\}, \{2\}, \{3, 4, 5\}\}$.
    \Cref{lem:main-lemma} \ref{item:cluster-upper-bound} implies that all pairs of nodes whose neighborhoods have a symmetric difference containing more than three nodes must be in distinct clusters of $\Pi$.
    For the given graph, we get $U^1_{\{0\}} = \{0, 1, 2\}$, $U^1_{\{1\}} = \{0, 1\}$, $U^1_{\{2\}} = \{0, 2\}$ and $U^1_{\{3, 4, 5\}} = \{3, 4, 5\}$.
    For all nodes, the bound \eqref{eq:disagreement-bound} is less than or equal to $1$.
    Therefore, the combinatorial lower bound of the given graph is $1$.

    In contrast, the LP bound is $\tfrac{5}{4}$ which is strictly greater than the combinatorial lower bound of $1$.
    The LP bound is assumed by the solution $x_{\{0,1\}} = x_{\{0,2\}} = \tfrac{1}{2}$, $x_{\{1,3\}} = x_{\{2,4\}} = \tfrac{3}{4}$, $x_{\{3,4\}} = x_{\{3,5\}} = x_{\{4,5\}} = \tfrac{1}{4}$, and $x_e = 1$ for all other edges.
    The maximum disagreement of $\tfrac{5}{4}$ is obtained at nodes $3$ and $4$.
\end{example}

\subsection{Non-complete and weighted graphs}

The combinatorial lower bound can be adapted to the case of non-complete graphs.
However, \Cref{lem:main-lemma} \ref{item:cluster-upper-bound} needs to be adjusted:
If the number of nodes that are in the negative neighborhood of $u$ and in the positive neighborhood of $v$, or vice versa, is greater than $2 \varphi(\Pi)$, then $u$ and $v$ must be in distinct clusters.
This leads to a different definition of $U^d$ in \Cref{def:lower-upper-cluster-bound} which in turn leads to a different definition of the combinatorial lower bound in \Cref{thm:clb}.
For this adapted version, \Cref{lem:unique-cluster}, \Cref{cor:unique-cluster}, and \Cref{thm:four-approximation} no longer hold.
Whether this bounding technique for non-complete graphs can be used to derive an approximation algorithm is an open problem.

The bounding technique can even be adapted to the weighted variant of the min max correlation clustering problem.
Again, the main difference lies in \Cref{lem:main-lemma}.
For two nodes $u$, $v$ in distinct clusters of a partition, each node $w$ in the positive neighborhood of both $u$ and $v$ contributes to either the disagreement of $u$ or $v$.
Therefore, $w$ contributes a disagreement of at least $\min(\theta_{uw}, \theta_{vw})$ where $\theta_e$ is the weight of edge $e$.
As a consequence, the maximum disagreement of $u$ and $v$ is at least $\tfrac{1}{2}\sum_{w \in N_u \cap N_v} \min(\theta_{uw}, \theta_{vw})$.
This yields an analogue to \Cref{lem:main-lemma} \ref{item:cluster-lower-bound} for the weighted case:
If $\sum_{w \in N_u \cap N_v} \min(\theta_{uw}, \theta_{vw}) > 2 \varphi(\Pi)$, then $[u]_\Pi = [v]_\Pi$.
Similarly, an analogue to \Cref{lem:main-lemma} \ref{item:cluster-upper-bound} can be derived.

\section{GREEDY JOINING ALGORITHM}\label{sec:algorithm}

The $4$-approximation due to \Cref{thm:four-approximation} only adds a cluster if it decreases the maximum disagreement of the nodes contained in it by at least a factor of four.
Nodes with smaller degree may remain in singleton clusters.
For example, all graphs from practical applications we consider in \Cref{sec:experiments} are such that the largest node degree is less than four times the combinatorial lower bound.
For these graphs, the partition into singleton clusters is a $4$-approximation.
In practical applications like these, it is desirable not only to find a feasible solution that satisfies the approximation guarantee but also to further improve this feasible solution by local search.

In this section, we introduce an algorithm that seeks to iteratively improve a given partition by greedily joining clusters.
While we do not establish any improved approximation guarantee, the feasible solutions found by this algorithm improve upon the empirical state of the art for the applications we consider in \Cref{sec:experiments}.

The algorithm is remarkable simple:
In each iteration, a node $w$ with largest disagreement with respect to the current partition $\Pi$ is chosen.
Every neighbor $v$ of $w$ that is not in the same cluster as $w$ in $\Pi$ contributes to the disagreement for $w$.
The disagreement of $w$ can potentially be decreased by joining the cluster of $w$ with the cluster of $v$.
If no such neighbor $v$ exists, the disagreement of $w$ cannot be improved by joining two clusters, and the algorithm terminates.
Otherwise, we choose one such $v$ that fits well to $w$ according to \Cref{lem:main-lemma}, i.e.~such that the intersection of neighborhoods $|N_w \cap N_v|$ is large and the symmetric difference of $|N_w \triangle N_v|$ neighborhoods is small.
This algorithm is detailed in \Cref{algo:greedy}. 

\begin{algorithm}\small
    \DontPrintSemicolon
    \SetAlgoNoEnd
    \LinesNumbered
    \KwData{Graph $G=(V,E)$}
    \KwResult{Partition $\Pi$ of $V$}
    Let $\Pi$ be the partition computed by the $4$-approximation algorithm (\Cref{thm:four-approximation}) \;
    \While{True}{\label{line:while}
        $w \in \argmax_{v \in V} |N_v \triangle [v]_\Pi|$ \label{line:select-worst} \; 
        $made\_join = False$ \;
        \For{$v \in N_w$ sorted by $|N_w \cap N_v| - |N_w \triangle N_v|$ in descending order}{\label{line:sort-neighborhood}
            \If{$[v]_\Pi = [w]_\Pi$}{
                \Continue \tcp{$v$ is already in the cluster of $w$.}
            }
            $C = [v]_\Pi \cup [w]_\Pi$ \tcp{join the clusters of $v$ and $w$}
            $d = \max_{u \in C} |N_u \triangle C|$ \label{line:compute-disagreement} \tcp{maximal disagreement of all nodes in $C$}
            \If{$d > |N_v \triangle [v]_\Pi|$}{\label{line:discard-join}
                \Continue \tcp{the maximal disagreement in $C$ is greater than the current maximal disagreement}
            }
            $\Pi := \Pi \setminus \{[v]_\Pi, [w]_\Pi\} \cup \{C\}$ \tcp{replace the clusters of $v$ and $w$ by $C$.}
            $made\_join = True$ \;
            \textbf{break} \;
        }
        \If{not $made\_join$}{
            \KwRet{$\Pi$}
        }
    }
    \caption{Greedy Joining}
    \label{algo:greedy}
\end{algorithm}

In several places of \Cref{algo:greedy}, ties can occur:
In \Cref{line:select-worst}, multiple nodes with largest agreement can exist.
In \Cref{line:sort-neighborhood}, multiple neighbors $v$ of $w$ can have the same value $|N_w \cap N_v| - |N_w \triangle N_v|$.
We break these ties by considering node degree, as the secondary ordering criterion, and node index, as the tertiary ordering criterion.
In \Cref{line:select-worst}, we select among all nodes with largest disagreement a node with largest/smallest degree and refer to these two options as Design Choice 1 (DC1).
In \Cref{line:sort-neighborhood}, we sort neighbors $v$ of $w$ with the same value $|N_w \cap N_v| - |N_w \triangle N_v|$ by increasing/decreasing degree (DC2).
We investigate two more choices in the design of \Cref{algo:greedy}:
Instead of sorting the neighbors of $w$ by $|N_w \cap N_v| - |N_w \triangle N_v|$, we consider sorting these only by $|N_w \cap N_v|$ or only by $-|N_w \triangle N_v|$ (DC3).
Finally, we consider making \Cref{line:discard-join} stricter by discarding a join if there exists $u \in C$ such that the disagreement of $u$ with $\Pi$ is strictly less than that of $w$, but the disagreement of $u$ with $C$ is equal to the disagreement of $w$ with $\Pi$ (DC4).

We denote by $\mathcal{A}$ the variant of \Cref{algo:greedy} where ties are resolved in favor of largest degree, where neighborhoods are sorted by $|N_w \cap N_v| - |N_w \triangle N_v|$ and with the stricter version of \Cref{line:discard-join}.
We denote by $\mathcal{A}^*$ the algorithm that executes \Cref{algo:greedy} for all 24 combinations of DC1-4 and outputs the best solution.

\begin{lemma}\label{lem:greedy-complexity}
    \Cref{algo:greedy} has worst case time complexity of $\mathcal{O}(n^2 \delta^2)$.
\end{lemma}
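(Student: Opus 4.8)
The plan is to bound the total running time as the product of the number of iterations of the outer \texttt{while} loop (\Cref{line:while}) and the worst-case cost of a single iteration, after fixing a suitable representation of the partition. I would maintain $\Pi$ by an array mapping each node to the index of its cluster, together with the list of members and the size of each cluster. This supports cluster-membership queries in $\mathcal{O}(1)$ time and the merge on \Cref{line:discard-join}'s following line in time linear in the total size of the two clusters; relabelling the nodes of the smaller cluster on each merge costs $\mathcal{O}(n)$, hence $\mathcal{O}(n^2)$ over the whole run, which will be dominated.

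First I would bound the number of outer iterations. The crucial structural observation is that every execution of the loop body either performs exactly one merge (when $made\_join$ becomes \textbf{True}) or returns $\Pi$ (when $made\_join$ remains \textbf{False}). A merge replaces two clusters of $\Pi$ by their union and therefore decreases $|\Pi|$ by one. Starting from the at most $n$ clusters produced by the $4$-approximation (\Cref{thm:four-approximation}) and never dropping below a single cluster, at most $n-1$ merges can occur, so the \texttt{while} loop runs at most $n$ times.

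Next I would bound the cost of one iteration. Selecting a node $w$ of largest disagreement (\Cref{line:select-worst}) costs $\mathcal{O}(n\delta)$: for each node $v$ the disagreement equals $|N_v| + |[v]_\Pi| - 2|N_v \cap [v]_\Pi|$, and the intersection is obtained by scanning the at most $\delta$ neighbours of $v$ and testing cluster membership in $\mathcal{O}(1)$ each. Computing the sort keys $|N_w \cap N_v| - |N_w \triangle N_v|$ for the at most $\delta$ neighbours of $w$ (via a membership marker for $N_w$) and sorting them (\Cref{line:sort-neighborhood}) costs $\mathcal{O}(\delta^2 + \delta\log\delta) = \mathcal{O}(\delta^2)$, which is dominated by what follows.

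The main obstacle, and the step that governs the overall bound, is the body of the inner \texttt{for} loop. For each of the at most $\delta$ candidate neighbours $v$, the algorithm forms the tentative merged cluster $C = [v]_\Pi \cup [w]_\Pi$ and computes $d = \max_{u \in C}|N_u \triangle C|$ on \Cref{line:compute-disagreement}. Using a boolean membership marker for $C$, each term $|N_u \triangle C| = |N_u| + |C| - 2|N_u \cap C|$ is evaluated in $\mathcal{O}(\delta)$ by scanning $N_u$, so computing $d$ takes $\mathcal{O}(|C|\,\delta) \subseteq \mathcal{O}(n\delta)$, while forming and marking $C$ is $\mathcal{O}(|C|) \subseteq \mathcal{O}(n)$ and is dominated. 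Because a candidate may be discarded on \Cref{line:discard-join}, up to $\delta$ candidates can be processed before a merge or before the loop ends, so one outer iteration costs $\mathcal{O}(\delta \cdot n\delta) = \mathcal{O}(n\delta^2)$. Multiplying by the $\mathcal{O}(n)$ outer iterations gives $\mathcal{O}(n^2\delta^2)$; since the initial $4$-approximation costs $\mathcal{O}(n^2 + n\delta^2)$ and the partition maintenance costs $\mathcal{O}(n^2)$, both of which lie in $\mathcal{O}(n^2\delta^2)$, the total worst-case running time is $\mathcal{O}(n^2\delta^2)$, as claimed. The subtle point to state carefully is that the $\mathcal{O}(n\delta)$ recomputation of $d$ is incurred once \emph{per discarded candidate}, not merely once per successful merge, which is exactly why the degree enters quadratically.
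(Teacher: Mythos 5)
Your proof is correct and follows the same decomposition as the paper's: $\mathcal{O}(n)$ outer iterations (one merge each, since the cluster count strictly decreases), times $\mathcal{O}(n\delta^2)$ per iteration from the at most $\delta$ candidate clusters on \Cref{line:sort-neighborhood}, each requiring $\mathcal{O}(n\delta)$ to evaluate \Cref{line:compute-disagreement}. Your added detail on data structures, the $\mathcal{O}(n\delta)$ cost of \Cref{line:select-worst}, and the per-discarded-candidate accounting merely makes explicit what the paper's terser argument leaves implicit.
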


\begin{proof}
    The algorithm terminates after $\mathcal{O}(n)$ joins.
    Each join can be computed in time $\mathcal{O}(n\delta^2)$:
    There are $|N_w| \leq \delta$ candidate clusters that can be joined with the cluster of $w$ (\Cref{line:sort-neighborhood}).
    For each candidate cluster, the maximum disagreement with respect to the joint cluster (\Cref{line:compute-disagreement}) is computed in time $\mathcal{O}(n\delta)$.
\end{proof}

Despite its high worst case time complexity, \Cref{algo:greedy} is efficient in practice (see \Cref{sec:experiments}) for the following reason:
In real world instances there are typically only a few nodes with large degree and thus large disagreement. 
Therefore, \Cref{algo:greedy} only operates on a few nodes and their neighborhoods.
In particular, it is not necessary to compute the size of the intersections of the neighborhoods of all pairs of nodes.
Note, however, that these computations are required in order to compute the combinatorial lower bound as well as in the algorithm by \citet{davies2023fast}.

\section{EXPERIMENTS}\label{sec:experiments}

We follow \citet{davies2023fast} and evaluate the combinatorial lower bound and the greedy joining algorithm on social network graphs as well as synthetic graphs. 
We compare the results to those of \citet{davies2023fast}.
All experiments were performed on a Lenovo X1 Carbon laptop equipped with an Intel Core i7-10510U CPU @ 1.80GHz and 16 GB LPDDR3 RAM.

\subsection{Datasets}

\paragraph*{Social network graphs}

The \emph{ego-Facebook} dataset \citep{mcauley2012learning} contains ten graphs that represent circles of friends from the social network Facebook.
The graphs contain 52 to 1,034 nodes and 146 to 30,025 edges.
The \emph{feather-lastfm-social} dataset \citep{feather} is a social network of 7,624 LastFM users with 27,806 edges.
The \emph{ca-HepPh} and \emph{ca-HepTh} datasets \citep{leskovec2007graph} are collaboration networks of authors in the field of high energy physics. 
They contain 12,008 and 9,977 nodes and 118,489 and 25,998 edges, respectively.
Lastly, the \emph{com-Youtube} dataset is a social network of 1,134,890 Youtube users with 2,987,624 edges.
These datasets are available online\footnote{\url{https://snap.stanford.edu/data/}}.

\paragraph*{Synthetic graphs}

In order to compare algorithms for the min max correlation clustering problem also in a more controlled setting, \Citet{davies2023fast} synthesize instances of the problem as follows:
Beginning with 10 distinct cliques containing 10 nodes each, a given number of pairs of nodes are selected at random. A selected pair is added to the edge set in case the nodes are not in a clique, and removed from the edge set, otherwise. I.e., edge are flipped from attractive to repulsive, or vice versa.
The number of disagreements increases with the number of flips.
For each number $f \in \{0,50,100,\dots,1000\}$ of flips, we construct 10 instances randomly, in this way.

\paragraph*{Results}

In \Cref{tab:ego-Facebook,tab:other}, we report empirical results for the social network graphs. 
In Columns 1-4, we report the name of the dataset, the number of nodes, number of edges and largest degree of the graph.
In the columns $\clb$ and LP, we report the combinatorial lower bound and the LP bound.
In the columns $\mathcal{A}$ and $\mathcal{A}^*$, we report the maximum disagreements of the partitions found by algorithms $\mathcal{A}$ and $\mathcal{A}^*$, as described in \Cref{sec:algorithm}.
In the columns $\dmn$ and $\kmz$, we report the maximum disagreements of the partitions found by the algorithm of \citet{davies2023fast} and \citet{kalhan2019correlation}.
Note that the rounding algorithm utilized by the $\dmn$ and $\kmz$ algorithm is parameterized by two parameters. We denote by $\dmn$ and $\kmz$ the results that are obtained by using the same parameters for all problem instances and denote by $\dmn^*$ and $\kmz^*$ the results that are obtained by searching for the best parameters for each problem instances individually (for details, see \citet{davies2023fast}).
In the last columns, we report the times for computing the combinatorial lower bound, for computing the LP bound, and for running the algorithms $\mathcal{A}$, $\mathcal{A}^*$, and $\dmn$.
As our algorithms are implemented in c++ and the $\dmn$ algorithm of \citet{davies2023fast} is originally implemented in python, we contribute, in addition, a performance optimized \textsc{c}++ implementation of $\dmn$ that exploits sparsity of the input graph.
We report the runtime of the original python implementation ($t_{\dmn}$), as well as that of our \textsc{c}++ implementation ($t_{\dmn}^{++}$).
We do not report the runtime of the $\kmz$ algorithm explicitly as it is dominated by solving the LP ($t_\text{LP}$).
The LP is solved with \Citet{gurobi}, using the barrier method.

\begin{table*}
    \centering
    \setlength\tabcolsep{3pt}
    \small
    \begin{tabular}{r r r r r r r r r r r r r r r r r r}
        \toprule
        ID & $|V|$ & $|E|$ & $\delta$ & $\clb$ & LP & $\mathcal{A}$ & $\mathcal{A}^*$ & $\dmn$ & $\dmn^*$ & $\kmz$ & $\kmz^*$ & $t_{\clb}$ & $t_\text{LP}$ &$t_\mathcal{A}$ & $t_{\mathcal{A}^*}$ & $t_{\dmn}$ & $t_{\dmn}^{++}$ \\
        \midrule
           0 &  333 &  2519 &  77 &  32 &     - &  46 &  44 &  49 &  49 &  - &  - &   5.32 &              - &  0.36 &  10.4 &   159 &   6.58 \\
         107 & 1034 & 26749 & 253 &  95 &     - & 123 & 122 & 152 & 134 &  - &  - & 112.38 &              - &  3.30 &  84.1 & 1,294 & 230.05 \\
         348 &  224 &  3192 &  99 &  39 & 39.13 &  61 &  50 &  72 &  71 & 89 & 69 &   2.12 & $1.5\cdot10^6$ &  0.53 &  12.4 &   117 &  13.04 \\
         414 &  150 &  1693 &  57 &  18 & 19.66 &  27 &  27 &  34 &  31 & 38 & 28 &   2.06 & $2.0\cdot10^5$ &  0.54 &  14.5 &    72 &   3.27 \\
         686 &  168 &  1656 &  77 &  31 & 30.48 &  45 &  43 &  47 &  43 & 69 & 47 &   1.06 & $4.1\cdot10^5$ &  0.21 &   6.7 &    84 &   4.17 \\
         698 &   61 &   270 &  29 &  11 & 10.64 &  16 &  16 &  20 &  18 & 18 & 17 &   0.20 & $1.6\cdot10^3$ &  0.06 &   1.5 &    13 &   0.27 \\
        1684 &  786 & 14024 & 136 &  52 &     - &  80 &  78 &  93 &  93 &  - &  - &  65.76 &              - &  2.11 &  50.0 &   814 &  67.03 \\
        1912 &  747 & 30025 & 293 & 118 &     - & 166 & 163 & 220 & 187 &  - &  - &  41.66 &              - &  4.63 & 116.4 &   877 & 272.84 \\
        3437 &  534 &  4813 & 107 &  49 &     - &  58 &  57 & 107 &  77 &  - &  - &   7.79 &              - &  0.89 &  14.9 &   405 &  16.24 \\
        3980 &   52 &   146 &  18 &   8 &  7.34 &  11 &  11 &  12 &  12 & 13 & 13 &   0.10 & $6.7\cdot10^2$ &  0.07 &   1.2 &     7 &   0.20 \\
        \bottomrule
        \end{tabular}
    \caption{This table summarizes the results for the \emph{ego-Facebook} graphs.
    The runtime is reported in milliseconds.
    The LP bound and $\kmz$ objective is only reported for the five smallest instances as solving the LP for the graph 348 already takes approximately 25 minutes.}
    \label{tab:ego-Facebook}
\end{table*}

\begin{table*}
    \centering
    \setlength\tabcolsep{3pt}
    \small
    \begin{tabular}{r r r r r r r r r r r r r r}
        \toprule
        Name & $|V|$ & $|E|$ & $\delta$ & $\clb$ & $\mathcal{A}$ & $\mathcal{A}^*$ & $\dmn$ & $\dmn^*$ & $t_{\clb}$ &$t_\mathcal{A}$ & $t_\text{$\dmn$}$ & $t_{\dmn}^{++}$ \\
        \midrule
        \emph{lastfm}  &     7,624 &    27,806 &   216  & 106 &    116 &    116 & 216 & 160 & 1,724 &           8.74 & $6.4\cdot10^4$ &   281 \\
        \emph{HepPh}   &    12,006 &   118,489 &   491  & 208 &    251 &    250 & 333 & 267 & 8,023 &          27.91 & $1.2\cdot10^5$ & 2,410 \\
        \emph{HepTh}   &     9,875 &    25,973 &    65  &  34 &     42 &     42 &  65 &  58 & 3,611 &           8.52 & $1.7\cdot10^5$ &   212 \\
        \emph{Youtube} & 1,134,890 & 2,987,624 & 28,754 &   - & 15,486 & 14,796 &   - &   - &   -   & $4.1\cdot10^4$ &              - &     - \\
        \bottomrule
        \end{tabular}
    \caption{This table summarizes the results for the \emph{feather-lastfm-social}, \emph{ca-HepPh}, \emph{ca-HepTh}, and \emph{com-Youtube} graphs.
    The runtime is reported in milliseconds.
    For the largest graph, the combinatorial lower bound and $\dmn$ are not computed as this would exceed time and memory constraints.}
    \label{tab:other}
\end{table*}

In \Cref{fig:synthetic}, we report bounds, maximum disagreements and runtimes for the synthetic instances of the min max correlation clustering problem.
In this figure, thick lines indicate the median across the 10 random instances, shaded areas indicate the second and third quartile, and dashed lines indicate the 0.1 and 0.9-quantile.

\begin{figure}
    \centering
    \includegraphics[width=\columnwidth]{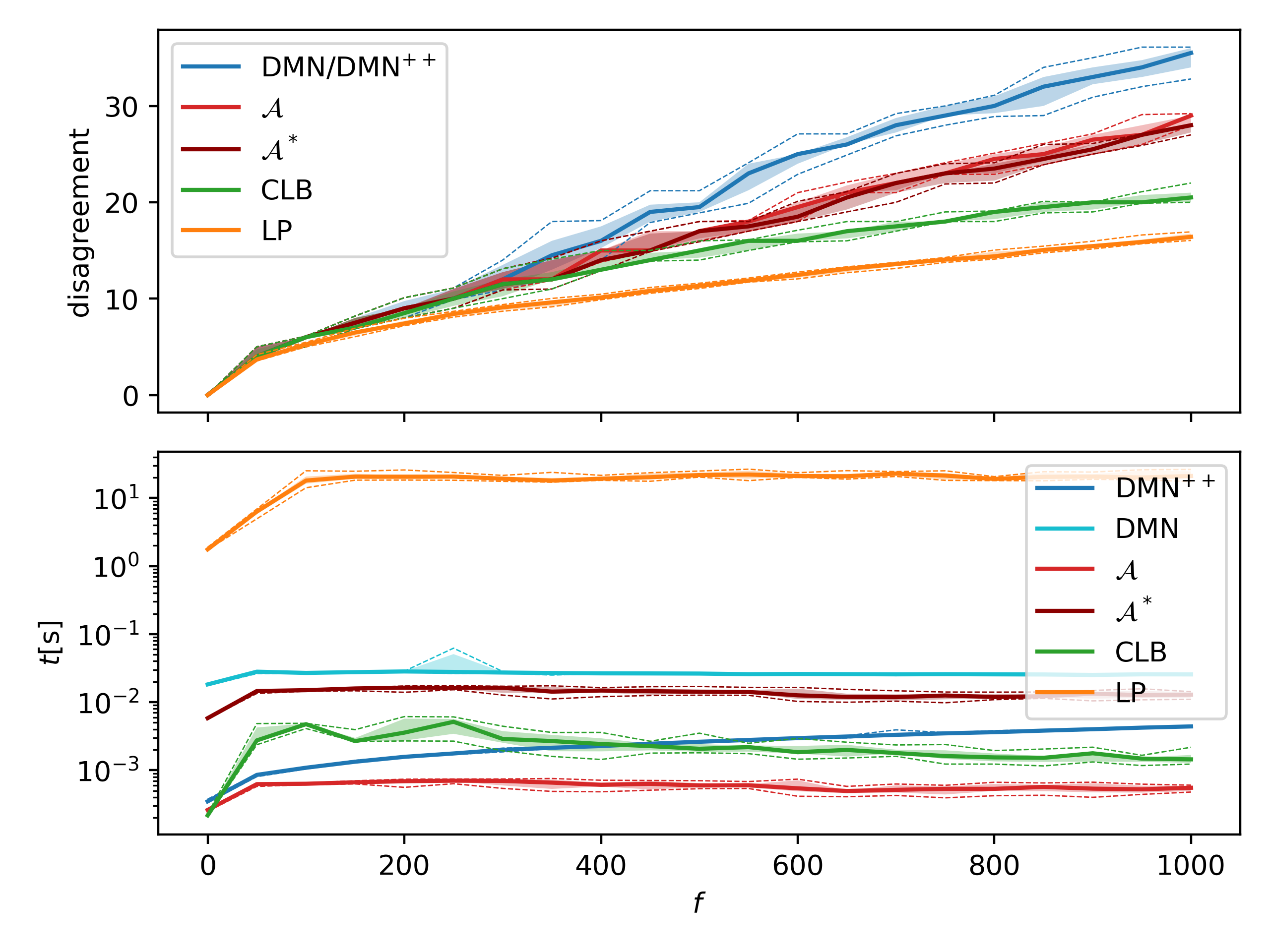}
    \caption{Depicted above are the maximum disagreements of the partitions computed by the $\dmn$ algorithm and our algorithms $\mathcal{A}$, $\mathcal{A}^*$ as well as the lower bounds according to the combinatorial lower bound ($\clb$) and the LP bound.
    Depicted below are the runtimes in seconds of all algorithms and both bounding techniques. $f$ is the number of random flips in a graph with 100 nodes (10 cliques of size 10).}
    \label{fig:synthetic}
\end{figure}

\paragraph*{Analysis}

As can be seen from \Cref{tab:ego-Facebook,tab:other}, the partitions computed by $\mathcal{A}$ have a lower maximum disagreement than those computed by $\dmn$ and $\kmz$, across all instances.
For the Facebook ego-network with ID 3437, the disagreement of the partition computed by $\mathcal{A}$ is $58$, compared to $107$ computed by $\dmn$.
The runtime of $\mathcal{A}$ is approximately one order of magnitude shorter than that of $\dmn$.
This is due to the fact that in $\dmn$, the intersections of neighborhoods of all pairs of nodes need to be computed, while in $\mathcal{A}$, these intersections only for the node with the largest disagreement and its neighbors need to be computed (\Cref{line:sort-neighborhood}).
The disagreement of partitions computed by $\mathcal{A}^*$ is often strictly less than that computed by $\mathcal{A}$.
The greatest relative improvement of $\mathcal{A}^*$ over $\mathcal{A}$ can be observed on the Facebook ego-network with ID 348.
Here, the disagreement of $50$ is achieved by the greedy joining algorithm in which neighborhoods are sorted by $-|N_v \triangle N_w|$.
Similarly, the disagreement of partitions computed by $\dmn^*$ and $\kmz^*$ are often strictly less than that computed by $\dmn$ and $\kmz$. 
The disagreements achieved by $\mathcal{A}^*$ are less than that of $\dmn^*$ and $\kmz^*$ on all instances except one where there is a tie.

The combinatorial lower bound ($\clb$) and the LP bound are similar. 
However, the combinatorial lower bound can be computed many orders of magnitude faster than the LP bound.
For the the Facebook ego-network with ID 348, solving the LP requires approximately 25 minutes while computing the combinatorial lower bound takes approximately 2 milliseconds.
The fact that the combinatorial lower bound can be computed so much more efficiently allows us to compute the first non-trivial lower bounds for larger instances. 
This includes the five large instances of the \emph{ego-Facebook} dataset for which we cannot report an LP bound in \Cref{tab:ego-Facebook}, and it includes the even larger graphs in \Cref{tab:other}.
Only for the \emph{com-Youtube} graph in \Cref{tab:other} with more than one million nodes have we found the computation the combinatorial lower bound to be impractical.
Across all other instances, the greatest relative gap between the maximum disagreement of the partition computed by $\mathcal{A}$ and the lower bound is 1.56 (Facebook ego-network with ID 348).

For synthetic graphs (\Cref{fig:synthetic}), the results are similar:
For small numbers of flips, both $\dmn$ and $\mathcal{A}$ compute the optimal partition, which can be seen from the fact that there is no gap between the optimal solution and the lower bound.
For larger numbers of flips, the gap between the lower bounds and the maximum disagreement of the computed partitions increases.
Notably, the gap of the partitions computed by $\mathcal{A}$ is approximately half of the gap of the partitions computed by $\dmn$.
The combinatorial lower bound ($\clb$) is slightly stronger than the LP bound.
The runtime of the greedy joining algorithm $\mathcal{A}$ is about one order of magnitude shorter than that of $\dmn$.
The difference in runtime is even greater between $\clb$ and the LP bound.
Similar to the social network graphs, the greatest relative gap we observe between $\clb$ and the maximum disagreement of the partition computed by $\mathcal{A}$ is approximately $1.5$.
In fact, we have not found instances where the relative gap between $\clb$ and the maximum disagreement of the partition computed by $\mathcal{A}$ is greater than $2$.

\section{CONCLUSION}\label{sec:conclusion}

We have introduced a combinatorial lower bounding technique for the min max correlation clustering problem for complete graphs.
There are instances where this bound is stronger than the canonical LP~bound, and vice versa.
This motivates future work to combine these bounds.
To this end, the constraints from \Cref{lem:main-lemma} can be expressed in the form of quadratic inequalities.
However, we have not observed improvements over the LP~bound when adding linear relaxations of these quadratic inequalities to the LP (results not shown).
From the combinatorial lower bound, we have derived a $4$-approximation that we have extended by a greedy local search heuristic.
On all instances we have considered in the experiments for this article, the greedy joining algorithm yields a $2$-approximation.
Whether the greedy joining algorithm is indeed a $2$-approximation algorithm is an open problem.
We have discussed briefly generalizations of the combinatorial lower bound to non-complete and weighted graphs.
Whether approximation guarantees can be derived for these cases, or other objectives than the min max objective discussed in this article, remains open.

\bibliography{references}
\bibliographystyle{plainnat}

\end{document}